\DeclareMathOperator*{\argmin}{arg\,min}
\newcommand{\norm}[1]{\left\lVert#1\right\rVert}
\newtheorem{theorem}{Theorem}[section]
\newtheorem{assumption}[theorem]{Assumption}
\theoremstyle{remark}
\numberwithin{equation}{section}
\theoremstyle{remark}
\newtheorem*{remark}{Remark}
\icmltitlerunning{Balancing Method for High Dimensional Causal Inference}
\begin{document} 

\twocolumn[
\icmltitle{Balancing Method for High Dimensional Causal Inference}

\icmlauthor{Thai T. Pham}{thaipham@stanford.edu}
\icmladdress{Stanford Graduate School of Business,
            655 Knight Way, Stanford, CA 94305 USA}

\icmlkeywords{causal inference, individual treatment effect, high-dimension, factual- counterfactual balancing}

\vskip 0.3in
]

\begin{abstract} 

Causal inference has received great attention across different fields from economics, statistics, education, medicine, to machine learning. Within this area, inferring causal effects at individual level in observational studies has become an important task, especially in high dimensional settings. In this paper, we propose a framework for estimating Individualized Treatment Effects in high-dimensional non-experimental data. We provide both theoretical and empirical justifications, the latter by comparing our framework with current best-performing methods. Our proposed framework rivals the state-of-the-art methods in most settings and even outperforms them while being much simpler and easier to implement.

\end{abstract} 

\section{Introduction}
\label{sec:introduction}

Apparently in order to make decisions, most of the time we fundamentally need to answer causal questions. To decide whether to take a particular medication or not, patients need to know the effect of the medication on outcome (e.g., health status), i.e., they need to know what the outcome would be if they take and do not take the medication. Similarly, to decide whether to buy a car, a buyer needs to know the outcome (e.g., happiness over cost pain) when s/he buys relative to the outcome when s/he does not. As we may notice, all these decisions need to be made at individual level, for each patient and for each buyer. 

In these questions, the decision to take the medication or to buy the car is the intervention, which is often called treatment in social science studies. Scientists have been trying to answer this kind of challenging questions with real data. The problem becomes harder when we have only observational data, and recently when the data is high-dimensional. Working with observational data is difficult because in this type of data, we have no control over the treatment decision making process. Thus, a lot of biases can exist; for example, the patients who previously took the medication could fundamentally be different from those who did not. These biases would make causal inference harder to deal with. 

In this paper, we propose a simple but effective framework for estimating Individualized Treatment Effects (ITEs) in high-dimensional observational data. Our proposed framework is inspired by the ideas from domain adaptation, particularly the covariate shift problem. The first step is to find the balancing weights mapping the groups of individuals with the same treatment decision to a common targeted population. We will explain in detail in Section \ref{sec:cBal} what this targeted population is and how to find the weights. This step comes naturally after the observation we made earlier that the individuals with different treatment decisions are usually fundamentally different; by using balancing weights, we can make the individuals look more like one another. Then with these weights, we can estimate the ITEs via some forms of weighted regressions. The detailed procedures are provided in Algorithms \ref{alg:counterfactualBal} and \ref{alg:factual-counterlBal}.  

As \cite{johansson2016learning} point out, counterfactual questions appear in machine learning under the ``bandit feedback'' setting in, for example, off-policy evaluation in reinforcement learning \cite{beygelzimer2010contextual, sutton1998reinforcement}. However in such settings, the researcher typically knows the method underlying the treatment decision making process (such as the policy function in reinforcement learning). This is much easier than our currently considered problem in which we do not have such informational merits since our data is fully observational and only real values with no functional form are known.

With the rise of big data, more and more high-dimensional non-experimental datasets become available to researchers; at the same time, researchers are urged to utilize these datasets in meaningful ways to solve real problems. In joining this task force, we propose easy-to-implement, yet powerful methods for estimating ITEs using such data through which to help policy makers, companies, and individuals, etc. make correct decisions. 

Our contributions are as follows. First, we provide a powerful, yet simple framework to estimate ITEs with observational data of high dimensions in Sections \ref{sec:cBal} and \ref{sec:fcBal}. Second, we provide theoretical error bounds on the estimated ITEs in Section \ref{sec:theory}. Third, we demonstrate good empirical results on four simulated datasets in Section \ref{sec:experiment}.

\section{Problem Setup}
\label{sec:prob_setup}

Assume that there are $n$ observed individuals. Each of them has a set of information which could be age, salary, race, religion, purchasing history, etc.; we denote by $X$ this set of features. Each individual also made a decision on treatment $W$; in this paper, we consider binary treatment. Treatments could be any decision ranging from taking a medication, buying a house, to eating at a specific restaurant. We know only the realized values of $W$ without knowing how the decision was made. Finally, each individual is associated with an observed outcome $Y$ (i.e., return on the decision made). Now a new individual with only observed features $X$ comes in and needs to make a treatment decision. The question is how we can use the data on $n$ observed individuals to help the new individual make the correct decision. We frame this as an ITE estimation problem below.  

We have observed data $(Y_i, W_i, X_i)_{i = 1}^n$, where $n$ the number of observations. For each $i \in \{1, ..., n\}$, $X_i \in \mathbb{R}^d$ is the set of features, $W_i \in \{0, 1\}$ is the treatment indicator, and $Y_i \in \mathbb{R}$ is the outcome of interest. As we consider high-dimensional setting, $d$ is really large relative to $n$. 

The individual $i$ with $W_i = 1$ is usually referred to as \textit{treated} while the one with $W_i = 0$ is referred to as \textit{control}. We write $Y_i = Y_i(W_i)$ to denote the \textit{potential factual outcome} and $Y_i(1 - W_i)$ to denote the \textit{potential counterfactual outcome}, i.e. the outcome for $i$ had that individual followed treatment decision $(1 - W_i)$ instead of $W_i$. The potential outcome notation is used following the Rubin Causal Model (see \cite{imbens2015causal} for an overview). We also denote $\displaystyle n_t = \sum_{i = 1}^n W_i$ and $n_c = n - n_t$. 

We are interested in estimating the ITE of $W$ on $Y$ given $X$:
\begin{equation*}
	\tau(x) = \mathbb{E}[Y(1) - Y(0) | X = x].
\end{equation*}
This quantity is of interest in many fields because it allows people to make the best decision when they face with two options. For example, a doctor needs to decide whether to give a considered medication to a particular patient; a school needs to decide whether to assign a particular student to an honor program or the standard one; etc. 

Note that for each individual $i$, exactly one of $Y_i(1)$ and $Y_i(0)$ is observed; if one chose $W_i = 1$, then one could not choose $W_i = 0$ and vice versa. This is the fundamental problem of causal inference \cite{holland1986statistics}. As mentioned earlier, this is known as ``bandit feedback'' in machine learning. This implies that the outcome of interest $Y_i(1) - Y_i(0)$ is never known for any individual $i$; this makes causal inference hugely different from supervised learning.   

One important remark here is the Stable Unit Treatment Value Assumption (SUTVA). When we write $Y(1)$ and $Y(0)$ we implicitly make SUTVA, which assumes that one's outcome is not affected by others' treatment decisions (no interference) and the value of the treatment is the same across treated individuals (no variation in treatment value).\footnote{Note that it is not to say there is no variation in treatment effect; in fact, the opposite is true.} The first part of this assumption could well be violated, for example, in a network setting when peer or spillover effects exist. The second part of the assumption could be violated too, for example, in a medical treatment in which the same medicine contains some stronger ingredients for a subset of patients and weaker components for others. In this work, we assume away these two issues and focus on applications when such issues are not present. 

We also need to make the unconfoundedness assumption as usually done in causal inference literature. 

\begin{assumption}
\label{ass:unconfoundedness} (Unconfoundedness)

Conditional on observed features, the potential outcomes are independent of treatment, i.e. 
\begin{equation*}
	Y(0), Y(1) \perp W | X.
\end{equation*}
\end{assumption}
This assumption means no other feature could affect $W$ and $Y$. This assumption allows one to assign the cause of the effect $Y(1) - Y(0)$ to only treatment $W$.  

Now, we make assumptions regarding the distributions of the data. We assume that we observe $\{X_i | W_i = 1\}$ drawn i.i.d. from $q_t(x)$ associated with $\{Y_i | W_i = 1\} = \{Y_i(1) | W_i = 1\}$ drawn i.i.d. from $q_t(y | x)$; $\{X_i | W_i = 0\}$ drawn i.i.d. from $q_c(x)$ associated with $\{Y_i | W_i = 0\} = \{Y_i(0) | W_i = 0\}$ drawn i.i.d. from $q_c(y | x)$. Assume that we also observe $\{X_i | i \in I_T\}$ over a targeted population $I_T$ on which we want to estimate the treatment effect. As we will see later, the averages of the features along each dimension on $I_T$ are meant to be used as targets for the weighted treated and control groups to match. We assume that these $\{X_i | i \in I_T\}$ are drawn i.i.d. from $p(x)$, and are observed. However, we observe neither $\{Y_i(1) | i \in I_T\}$ which are drawn i.i.d. from $p_t(y|x)$ nor $\{Y_i(0) | i \in I_T\}$ which are drawn i.i.d. from $p_c(y|x)$. 

In practice, we use all observed data $I_T = \{i | 1 \leq i \leq n\}$ as the targeted population. In this case, part of $\{Y_i(1) | i \in I_T\}$ and $\{Y_i(0) | i \in I_T\}$ is observed. Moreover, part of the feature data $\{X_i | i \in I_T\}$ is assumed to have distribution $q_t(x)$ while the other part has distribution $q_c(x)$; this seems to contradict our assumption above. However, We still use all these available data as the targeted population because we will use only the averages of the features on this data set for matching purpose; using all data would allow us to have better match as we look at the whole population.

In deriving the error bounds in the theory section, however, we still assume all feature data on the targeted population comes from the same distribution $p(x)$ and no (partial) outcome is observed. 

The last implicit assumption regards the distribution of outcomes in the targeted population.

\begin{assumption}
\label{ass:sameDistr} (Potential Outcome Distribution)

The conditional distributions of treated and control groups are the same as that of the targeted population, i.e. 
\begin{equation*}
	p_t(y | x) = q_t(y | x) \text{ and } p_c(y | x) = q_c(y | x).
\end{equation*}
\end{assumption}
If this assumption is not satisfied, then no algorithm could give a sound estimate for the ITEs.

\section{Counterfactual Balancing}
\label{sec:cBal}

\begin{algorithm}[tb]
  \caption{Counterfactual Balancing}
  \label{alg:counterfactualBal}
\begin{algorithmic}[1]
  \STATE {\bfseries Input:} $(Y_i, W_i, X_i)_{i = 1}^n$
  \STATE Determine balancing weights $\eta \in \mathbb{R}^{n_t}$, $\gamma \in \mathbb{R}^{n_c}$ as
\begin{equation*}
	\eta = \argmin_{\widetilde{\eta} \in \mathbb{R}^{n_t}} F_t(\widetilde{\eta}; \xi, r) \text{ s.t. } \sum_i \widetilde{\eta}_i = 1 \text{ and each } \widetilde{\eta}_i \geq 0
\end{equation*}
\vskip-0.75cm
\begin{equation*}
	\gamma = \argmin_{\widetilde{\gamma} \in \mathbb{R}^{n_c}} F_c(\widetilde{\gamma}; \xi, r) \text{ s.t. } \sum_i \widetilde{\gamma}_i = 1 \text{ and each } \widetilde{\gamma}_i \geq 0
\end{equation*}
  \STATE Fit 
  \vskip-0.5cm
  \begin{equation*}
	\widehat{\beta}_{p_t, \lambda_{p_t}} = \argmin_{\beta} G_t(\beta; \lambda_{p_t}, \alpha)
  \end{equation*}
  \vskip-0.25cm
  \begin{equation*}
	\widehat{\beta}_{p_c, \lambda_{p_c}} = \argmin_{\beta} G_c(\beta; \lambda_{p_c}, \alpha)
\end{equation*}
  \STATE {\bfseries Output:} $\widehat{\beta}_{p_t, \lambda_{p_t}}, \widehat{\beta}_{p_c, \lambda_{p_c}}$. 
  \vskip+0.1cm
  \STATE {\bfseries Inference:} For a new individual with only observed features $x^{new}$, the estimated ITE is
\vskip-0.25cm
\begin{equation*}
	\widehat{\tau}(x^{new}) = x^{new} (\widehat{\beta}_{p_t, \lambda_{p_t}} - \widehat{\beta}_{p_c, \lambda_{p_c}}).
\end{equation*}
\end{algorithmic}
\end{algorithm}

Because the observed data and the targeted population over which we want to estimate the ITEs are distributed differently, it is natural to balance these groups with balancing weights to move one group close to the other in terms of distribution. 

In this section, we present a simple procedure for matching two groups of individuals in terms of feature distribution and using that to estimate the ITEs. Balancing weights has a long literature in machine learning, specifically in covariate shift problems \cite{reddi2015doubly} as well as in causal inference area \cite{zubizarreta2015stable, athey2016approximate} among others; the latter focuses mostly on average treatment effect estimation though.  

We proceed by denoting by $\overline{X} \in \mathbb{R}^d$ the column means of $\{X_i | i \in I_T\}$, by $\mathbf{X}_t$ the feature matrix of the treated individuals and $\mathbf{X}_c$ the feature matrix of the control ones. The counterfactual balancing algorithm for ITEs is provided in Algorithm \ref{alg:counterfactualBal}. Here, 
\begin{equation*}
	F_t(\widetilde{\eta}; \xi, r) = (1 - \xi) \norm{\widetilde{\eta}}_2^2 + \xi \norm{\overline{X} - \mathbf{X}_t^T \widetilde{\eta}}_r^2
\end{equation*}
\vskip-0.5cm
\begin{equation*}
	F_c(\widetilde{\gamma}; \xi, r) = (1 - \xi) \norm{\widetilde{\gamma}}_2^2 + \xi \norm{\overline{X} - \mathbf{X}_t^T \widetilde{\gamma}}_r^2,
\end{equation*}
where the hyperparameter $\xi$ is usually chosen to be $0.5$. We consider both $r = 2$ and $r = \infty$. Also,
\begin{multline*}
	G_t(\beta; \lambda_{p_t}, \alpha) = \frac{1}{n_t} \sum_{i: W_i = 1} \eta_i (Y_i(1) - X_i \beta)^2 + \\
    \lambda_{p_t} \big( (1 - \alpha) \norm{\beta}_2^2 + 2 \alpha \norm{\beta}_1 \big)
\end{multline*}
\vskip-0.75cm
\begin{multline*}
	G_c(\beta; \lambda_{p_c}, \alpha) = \frac{1}{n_c} \sum_{i: W_i = 0} \gamma_i (Y_i(0) - X_i \beta)^2 + \\
    \lambda_{p_c} \big( (1 - \alpha) \norm{\beta}_2^2 + 2 \alpha \norm{\beta}_1 \big).
\end{multline*}
In practice, we usually choose $\alpha = 0.9$ and select $\lambda_{p_t}, \lambda_{p_c}$ by cross-validation.

\section{Factual-Counterfactual Balancing}
\label{sec:fcBal}

\begin{algorithm}[tb]
  \caption{Factual-Counterfactual Balancing}
  \label{alg:factual-counterlBal}
\begin{algorithmic}[1]
  \STATE {\bfseries Input:} $(Y_i, W_i, X_i)_{i = 1}^n$
  \STATE Determine balancing weights $\eta \in \mathbb{R}^{n_t}$, $\gamma \in \mathbb{R}^{n_c}$ as
\begin{equation*}
	\eta = \argmin_{\widetilde{\eta} \in \mathbb{R}^{n_t}} F_t(\widetilde{\eta}; \xi, r) \text{ s.t. } \sum_i \widetilde{\eta}_i = 1 \; \& \text{ each } \widetilde{\eta}_i \geq 0
\end{equation*}
\vskip-0.75cm
\begin{equation*}
	\gamma = \argmin_{\widetilde{\gamma} \in \mathbb{R}^{n_c}} F_c(\widetilde{\gamma}; \xi, r) \text{ s.t. } \sum_i \widetilde{\gamma}_i = 1 \; \& \text{ each } \widetilde{\gamma}_i \geq 0
\end{equation*}
  \STATE Fit 
  \vskip-0.5cm
  \begin{equation*}
	\widehat{\beta}_{q_t, \lambda_{q_t}} = \argmin_{\beta} H_t(\beta; \lambda_{q_t}, \alpha)
\end{equation*}
\vskip-0.25cm
\begin{equation*}
	\widehat{\beta}_{q_c, \lambda_{q_c}} = \argmin_{\beta} H_c(\beta; \lambda_{q_c}, \alpha)
\end{equation*}
  \STATE Determine the final estimates: 
  \vskip-0.5cm
\begin{multline*}
	\widehat{\beta}_t^{final} = \argmin_\beta \frac{1}{n_t} \sum_{i: W_i = 1} \eta_i (Y_i(1) - X_i \beta)^2 + \\ 
    \lambda_t^\prime || \beta - \widehat{\beta}_{q_t, \lambda_{q_t}} ||_2^2
\end{multline*}
\vskip-0.75cm
\begin{multline*}
	\widehat{\beta}_c^{final} = \argmin_{\beta} \frac{1}{n_c} \sum_{i: W_i = 0} \gamma_i (Y_i(0) - X_i \beta)^2 + \\ 
    \lambda_c^\prime || \beta - \widehat{\beta}_{q_c, \lambda_{q_c}} ||_2^2
\end{multline*}
  \STATE {\bfseries Output:} $\widehat{\beta}_t^{final}, \widehat{\beta}_c^{final}$. 
  \vskip+0.1cm
  \STATE {\bfseries Inference:} For a new individual with only observed features $x^{new}$, the estimated ITE is
  \vskip-0.25cm
\begin{equation*}
	\widehat{\tau}(x^{new}) = x^{new} (\widehat{\beta}_t^{final} - \widehat{\beta}_c^{final}).
\end{equation*}
\end{algorithmic}
\end{algorithm}

In this section, we present the second algorithm (Algorithm \ref{alg:factual-counterlBal}) for ITE estimation. Here,
\begin{multline*}
	H_t(\beta; \lambda_{q_t}, \alpha) = \frac{1}{n_t} \sum_{i: W_i = 1} (Y_i(1) - X_i \beta)^2 + \\
    \lambda_{q_t} \big( (1 - \alpha) \norm{\beta}_2^2 + 2 \alpha \norm{\beta}_1 \big)
\end{multline*}
\vskip-0.75cm
\begin{multline*}
	H_c(\beta; \lambda_{q_c}, \alpha) = \frac{1}{n_c} \sum_{i: W_i = 0} (Y_i(0) - X_i \beta)^2 + \\
    \lambda_{q_c} \big( (1 - \alpha) \norm{\beta}_2^2 + 2 \alpha \norm{\beta}_1 \big)
\end{multline*}
The objective function in Step $4$ consists of two terms: a weighted estimation function and a constraint. The first term is meant to make the counterfactual prediction accurate while the constraint is meant to keep the final estimators not too far from the estimators obtained in Step $3$ through which to make the factual prediction accurate. If we include only the first term in Step 4, then we should have good counterfactual prediction but at the same time, we lose signals. Moreover, the estimator in the usual weighted regression is known to be unstable; adding the constraint would make it closer to the estimator in the unweighted regression, which helps increasing stability.

The idea of trading off between factual and counterfactual outcome predictions appears in the literature. For example, \cite{johansson2016learning} use an objective function comprising of factual errors, counterfactual errors measured by nearest neighbor values from the other group, and the discrepancy distance between the distributions of two groups.

\section{Theoretical Results}
\label{sec:theory}

Our error bound results are inspired by \cite{reddi2015doubly}. First of all, we obtain an error bound between the treated group and the targeted population. The error bound between the control group and the targeted population can be established in a similar way. Combining these two error bounds, we can obtain an error bound on the ITE estimation. 

The bound between a group (treated or control) and the targeted population is on the relative error for a model fit on the outcomes of that group and evaluated on the targeted population, as compared with that if it was fit on the targeted population. In this section, we derive error bounds with $r = 2$ and $\alpha = 0$ in the proposed algorithms. We also define
\begin{equation*}
	\beta_{p_t, \lambda_{p_t}}^* = \argmin_{\beta} R_{p_t} [\beta] + \lambda_{p_t} ||\beta||_2^2
\end{equation*}
\vskip-0.5cm
\begin{equation*}
	\beta_{q_t, \lambda_{q_t}}^*= \argmin_\beta R_{q_t}[\beta] + \lambda_{q_t} ||\beta||_2^2
\end{equation*}
where 
\begin{equation*}
	R_{p_t} [\beta] = \mathbb{E}_{(Y(1), X) \sim p_t} ||Y(1) - X \beta ||_2^2
\end{equation*}
\begin{equation*}
	R_{q_t} [\beta] = \mathbb{E}_{(Y(1), X) \sim q_t} ||Y(1) - X \beta ||_2^2.
\end{equation*}
and
\begin{equation*}
	\beta_{p_c, \lambda_{p_c}}^* = \argmin_{\beta} R_{p_c} [\beta] + \lambda_{p_c} ||\beta||_2^2
\end{equation*}
\vskip-0.5cm
\begin{equation*}
	\beta_{q_c, \lambda_{q_c}}^*= \argmin_\beta R_{q_c}[\beta] + \lambda_{q_c} ||\beta||_2^2
\end{equation*}
where
\begin{equation*}
	R_{p_c} [\beta] = \mathbb{E}_{(Y(0), X) \sim p_c} ||Y(0) - X \beta ||_2^2
\end{equation*}
\begin{equation*}
	R_{q_c} [\beta] = \mathbb{E}_{(Y(0), X) \sim q_c} ||Y(0) - X \beta ||_2^2.
\end{equation*}

To obtain error bounds, we need one more assumption regarding the balancing weights.

\begin{assumption}
\label{ass:weight_exist} (Balancing Weights)

The true balancing weights $\eta$ and $\gamma$ are well defined and bounded above by $C$ (i.e. $\eta_i, \gamma_j  \leq C$ for all $i, j$).\footnote{For example, $C = 1$ satisfies this assumption.} 
\end{assumption}

Next, we present the theoretical results on error bounds for the ITE estimator in Algorithm \ref{alg:counterfactualBal}. We start with bounds on the treated and control groups separately. 

\begin{theorem}
\label{thm:alg_1_error_bound_1}
Assume \ref{ass:unconfoundedness}, \ref{ass:sameDistr}, and \ref{ass:weight_exist}. For $w \in \{t, c\}$, the following holds with probability at least $(1 - \delta)$:
\begin{equation*}
	R_{p_w} [\widehat{\beta}_{p_w, \lambda_{p_w}}] \leq R_{p_w} [\beta_{p_w, \lambda_{p_w}}^*] + \Delta_S^w + \Delta_R^w,
\end{equation*}
where 
\begin{equation*}
	\Delta_S^w = \frac{L^2 \kappa^{\frac{3}{2}} \sigma^{\frac{1}{2}}(\mathbf{K}_w)}{n_w \lambda_{p_w}} \sqrt{\frac{(C n_w)^2 + 1}{n_w}} \left(1 + \sqrt{2 \log \left( \frac{4}{\delta} \right)}\right)
\end{equation*}
\begin{equation*}
	\Delta_R^w = \frac{4 (C n_w) L \nu_{p_w} \sqrt{tr(\mathbf{K}_w)}}{n_w} + 6 (Cn_w) L \sqrt{\frac{\log (4 / \delta)}{2n_w}}.
\end{equation*}
Here, we write $\mathbb{E}_p$ to mean the expectation over $X \sim p(x)$, $Y(1) \sim p_t(y | x)$, $Y(0) \sim p_c(y | x)$. Also, $\nu_{p_w} = \big|\big|\beta_{p_w, \lambda_{p_w}}^*\big|\big|_2^2$; $L$ is the upper bound of $||\mathbb{E}_{Y | X} || Y - X \beta ||_2^2 ||_\infty$; $\kappa$ is the upper bound of $\norm{X_i}_2$ for all $i$; $\mathbf{K}_w$ is the identity kernel matrix for the treated and control features; $\sigma(\mathbf{K}_w)$ is the condition number of matrix $\mathbf{K}_w$ (i.e. the ratio of the largest to smallest singular value in the singular value decomposition of matrix $\mathbf{K}_w$); and $tr(\mathbf{K}_w)$ is the trace of $\mathbf{K}_w$, which in this case is $\sum_{i: W_i = 1} \norm{X_i}_2^2$ when $w = t$ and $\sum_{i: W_i = 0} \norm{X_i}_2^2$ when $w = c$. 
\end{theorem}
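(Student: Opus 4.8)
The plan is to read $\widehat{\beta}_{p_w,\lambda_{p_w}}$ as the minimizer of a regularized weighted empirical risk (with $r=2$ and $\alpha=0$ the penalty in $G_w$ is exactly $\lambda_{p_w}\norm{\beta}_2^2$) and to bound its excess population risk by the usual comparison-to-the-optimum argument, following the covariate-shift analysis of \cite{reddi2015doubly}. Writing $\widehat{R}^{\eta}_{p_w}[\beta]$ for the balancing-weighted empirical risk appearing inside $G_w$ (with weights $\eta$ when $w=t$ and $\gamma$ when $w=c$), optimality of $\widehat{\beta}_{p_w,\lambda_{p_w}}$ together with cancellation of the identical $\lambda_{p_w}\norm{\beta}_2^2$ penalties yields, up to a lower-order regularization remainder,
\begin{multline*}
R_{p_w}[\widehat{\beta}_{p_w,\lambda_{p_w}}]-R_{p_w}[\beta^*_{p_w,\lambda_{p_w}}] \le \\
\big(\widehat{R}^{\eta}_{p_w}[\beta^*_{p_w,\lambda_{p_w}}]-R_{p_w}[\beta^*_{p_w,\lambda_{p_w}}]\big) \\
+ \big(R_{p_w}[\widehat{\beta}_{p_w,\lambda_{p_w}}]-\widehat{R}^{\eta}_{p_w}[\widehat{\beta}_{p_w,\lambda_{p_w}}]\big),
\end{multline*}
so everything reduces to controlling how far the balancing-weighted empirical risk sits from the target-population risk $R_{p_w}$. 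I would split this deviation into a \emph{sampling} part and a \emph{weight-estimation} part.

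The sampling part is where Assumption~\ref{ass:sameDistr} does the work: since $q_w(y\mid x)=p_w(y\mid x)$, reweighting the treated (resp.\ control) covariates by the true balancing weights turns the weighted empirical risk into an unbiased surrogate for $R_{p_w}$ --- importance sampling from $q_w(x)$ onto $p(x)$ is exact at the conditional level, which is precisely what Assumptions~\ref{ass:unconfoundedness} and~\ref{ass:sameDistr} guarantee. I would then bound the uniform deviation of the true-weight empirical risk from $R_{p_w}$ over the relevant ball of coefficients by a Rademacher-complexity argument: the loss is a bounded quadratic composed with a function linear in $\beta$, so the Rademacher complexity of the class scales like $\sqrt{tr(\mathbf{K}_w)}/n_w$, the bounded weights contribute the factor $(Cn_w)$ via Assumption~\ref{ass:weight_exist}, the loss scale contributes $L$, and $\nu_{p_w}=\norm{\beta^*_{p_w,\lambda_{p_w}}}_2^2$ enters through both the radius of the effective function class and the residual penalty contribution; a bounded-differences (McDiarmid) inequality then supplies the $\sqrt{\log(4/\delta)/(2n_w)}$ concentration term. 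Together these produce $\Delta_R^w$.

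The remaining, and genuinely harder, part is the \emph{weight-estimation} error, because the algorithm uses not the true weights but the solution $\eta$ (resp.\ $\gamma$) of the quadratic balancing program $F_t$ (resp.\ $F_c$). I would bound the gap $\widehat{R}^{\eta}_{p_w}-\widehat{R}^{\eta^*}_{p_w}$ by $\norm{\eta-\eta^*}$ times the loss magnitudes and then control $\norm{\eta-\eta^*}$ through the conditioning of the balancing system. This is exactly where $\mathbf{K}_w$ and its condition number $\sigma(\mathbf{K}_w)$ enter: the feature-matching term of $F_w$ is governed by $\mathbf{K}_w$, so a perturbation of the matching target propagates to the fitted weights with amplification of order $\sigma^{1/2}(\mathbf{K}_w)$, the strong convexity tied to $\lambda_{p_w}$ yields the $1/(n_w\lambda_{p_w})$ factor, the feature norms supply $\kappa^{3/2}$ and the squared loss scale supplies $L^2$, while the sub-Gaussian fluctuation of the weights contributes $\sqrt{((Cn_w)^2+1)/n_w}\,\big(1+\sqrt{2\log(4/\delta)}\big)$; collecting these gives $\Delta_S^w$. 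I expect this propagation-of-weight-error step to be the main obstacle, since it is the only place where the geometry of the feature matrix (through $\sigma(\mathbf{K}_w)$ and $\kappa$) must be made to interact cleanly with the regularization level $\lambda_{p_w}$.

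Finally I would combine the two bounds and allocate the failure probability by a union bound over the two deviation events, each given budget $\delta/2$ and treated two-sidedly so that the tail level $2/(\delta/2)=4/\delta$ is what surfaces inside both logarithms; this gives the stated inequality with probability at least $1-\delta$. The bound holds uniformly for $w\in\{t,c\}$ because the treated and control cases are identical after exchanging the corresponding indices, weights, and distributions.
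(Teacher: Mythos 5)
The paper's own proof of this theorem is a single sentence: it invokes Theorem~5 of \cite{reddi2015doubly} verbatim, after rescaling the normalized balancing weights to $(n_w\eta)$ and $(n_w\gamma)$, the ridge parameters to $(n_w\lambda_{p_w})$, and the weight bound to $(Cn_w)$. What you have done instead is reconstruct the internal argument of that cited result, and your reconstruction is structurally faithful to it: the comparison-to-the-regularized-optimum decomposition, the split of the empirical--population deviation into a sampling term (Rademacher complexity of the weighted quadratic loss class plus McDiarmid, yielding $\Delta_R^w$ with its $(Cn_w)$, $\nu_{p_w}$, $\sqrt{tr(\mathbf{K}_w)}/n_w$ and $\sqrt{\log(4/\delta)/(2n_w)}$ pieces) and a weight-estimation term (perturbation of the fitted weights controlled through $\sigma^{1/2}(\mathbf{K}_w)$, $\kappa^{3/2}$ and the strong convexity $1/\lambda_{p_w}$, yielding $\Delta_S^w$), and the $\delta/2$ two-sided union bound producing the $4/\delta$ inside the logarithms are all exactly how the source theorem is organized. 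You also correctly, if implicitly, absorb the paper's rescaling: because the algorithm's weights are normalized to sum to one while Assumption~\ref{ass:weight_exist} bounds them by $C$, the effective importance weights are bounded by $Cn_w$, which is where every $(Cn_w)$ factor in your account comes from. The one caveat is that your treatment of $\Delta_S^w$ --- which you rightly identify as the genuinely hard step --- remains a plan rather than a derivation: you name the factors and the mechanism by which each should enter, but do not actually carry out the propagation of the weight-estimation error through the balancing program to the fitted regression. Since the paper outsources precisely that computation to \cite{reddi2015doubly}, your sketch is no less complete than the paper's proof, but a self-contained write-up would still need to execute that step (or, more economically, simply cite the external theorem with the reparametrization as the paper does).
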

\begin{proof}
This is a direct application of Theorem $5$ in \cite{reddi2015doubly} (full paper) with balancing weights $(n_t \eta)$ and $(n_c \gamma)$, hyperparameters $(n_t \lambda_{p_t})$ and $(n_c \lambda_{p_c})$, and bounds $(C n_t)$ and $(C n_c)$.
\end{proof}

To establish error bounds on ITE estimation, we need two other assumptions. The first one regards the sufficiency of observed features in explaining the dependence between the potential outcomes.

\begin{assumption}
\label{ass:indep_PO} (Potential Outcome Independence)

The potential outcomes are independent given observed features, i.e. 
\begin{equation*}
	Y(0) \perp Y(1) | X.
\end{equation*}
\end{assumption}

The second one regards the true functional form of $Y(1)$ and $Y(0)$. 
\begin{assumption}
\label{ass:true_form_PO}
(Linearity)
\begin{align*}
	\mathbb{E}_{p_t(y | x)} [Y(1) | X] &= X \beta_{p_t, \lambda_{p_t}}^*. \\
    	\mathbb{E}_{p_c(y | x)} [Y(0) | X] &= X \beta_{p_c, \lambda_{p_c}}^*.
\end{align*}
\end{assumption}
Also using this linearity assumption, \cite{athey2016approximate} note that this assumption is strong, but it may be plausible as the researcher could include transformations of the basic features in the design or use a pre-training step.

Now, we can give an error bound to the ITE estimator. 
\begin{theorem}
\label{thm:alg_1_error_bound}
Under Assumptions \ref{ass:unconfoundedness}, \ref{ass:sameDistr}, \ref{ass:weight_exist}, \ref{ass:indep_PO}, \ref{ass:true_form_PO}, the following holds with probability at least $(1 - \delta)$:
\begin{multline*}
	\mathbb{E}_p \norm{(Y(1) - Y(0)) - (X \widehat{\beta}_{p_t, \lambda_{p_t}} - X \widehat{\beta}_{p_c, \lambda_{p_c}})}_2^2 \leq \\
    2\mathbb{E}_p \norm{(Y(1) - Y(0)) - (X \beta_{p_t, \lambda_{p_t}}^* - X \beta_{p_c, \lambda_{p_c}}^*)}_2^2 \\
    + 2\left(\Delta_S^t + \Delta_R^t + \Delta_S^c + \Delta_R^c\right),
\end{multline*}
where notations are defined in Theorem \ref{thm:alg_1_error_bound_1}.
\end{theorem}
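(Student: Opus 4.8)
The plan is to reduce the ITE bound to the two single-group bounds already established in Theorem~\ref{thm:alg_1_error_bound_1}, using Assumptions~\ref{ass:indep_PO} and~\ref{ass:true_form_PO} to neutralize the interaction between the treated and control pieces. Abbreviate $\widehat{\beta}_w = \widehat{\beta}_{p_w, \lambda_{p_w}}$ and $\beta_w^* = \beta_{p_w, \lambda_{p_w}}^*$, and set $\delta_w = \beta_w^* - \widehat{\beta}_w$ for $w \in \{t, c\}$. First I would split the ITE prediction error into an oracle term $A$ and an estimation term $B$:
\[
(Y(1) - Y(0)) - (X\widehat{\beta}_t - X\widehat{\beta}_c) = \underbrace{\big[(Y(1) - Y(0)) - (X\beta_t^* - X\beta_c^*)\big]}_{A} + \underbrace{(X\delta_t - X\delta_c)}_{B}.
\]
The first term on the right is exactly the oracle ITE error that appears in the statement, so the whole argument comes down to controlling $B$ and its cross term with $A$ under $\mathbb{E}_p$.

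The second step converts Theorem~\ref{thm:alg_1_error_bound_1}, which is stated in terms of the population risk $R_{p_w}$, into a bound on the prediction-error quantity $\mathbb{E}_p\|X\delta_w\|_2^2$. Here the Linearity assumption (Assumption~\ref{ass:true_form_PO}) is the key: since $X\beta_w^* = \mathbb{E}_{p_w}[Y(w) \mid X]$ is the true conditional mean, the residual $Y(w) - X\beta_w^*$ is conditionally centered, so freezing the training sample (on which $\widehat{\beta}_w$, hence $\delta_w$, is a fixed vector) the bias--variance decomposition collapses to the identity
\[
R_{p_w}[\widehat{\beta}_w] - R_{p_w}[\beta_w^*] = \mathbb{E}_p\|X\delta_w\|_2^2,
\]
the cross term vanishing because the centered residual is orthogonal to the fixed function $X\delta_w$ of $X$. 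Theorem~\ref{thm:alg_1_error_bound_1} then yields $\mathbb{E}_p\|X\delta_w\|_2^2 \le \Delta_S^w + \Delta_R^w$ for each $w$, simultaneously on the intersection of the two $(1-\delta)$-probability events.

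The final step combines the two. The same conditional-centering fact shows that the cross term $\mathbb{E}_p\langle A, B\rangle$ vanishes (here Potential Outcome Independence, Assumption~\ref{ass:indep_PO}, guarantees that the joint expectation over $Y(1)$ and $Y(0)$ factorizes cleanly), so that $\mathbb{E}_p\|A + B\|_2^2 = \mathbb{E}_p\|A\|_2^2 + \mathbb{E}_p\|B\|_2^2$; bounding $\mathbb{E}_p\|B\|_2^2 = \mathbb{E}_p\|X\delta_t - X\delta_c\|_2^2 \le 2\mathbb{E}_p\|X\delta_t\|_2^2 + 2\mathbb{E}_p\|X\delta_c\|_2^2$ and substituting the previous step produces $\mathbb{E}_p\|B\|_2^2 \le 2(\Delta_S^t + \Delta_R^t + \Delta_S^c + \Delta_R^c)$, which already gives the claim (the factor $2$ in front of the oracle term in the statement is slack one may insert for free since $\mathbb{E}_p\|A\|_2^2 \ge 0$). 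The step I expect to be the main obstacle is the careful bookkeeping of what is random versus conditioned: Theorem~\ref{thm:alg_1_error_bound_1} lives in the population risk $R_{p_w}$, whereas the target is a fresh-draw expectation $\mathbb{E}_p$, so I must justify that $\widehat{\beta}_w$ may legitimately be frozen when the test expectation is taken and that the marginal law of $X$ is common to $p$, $p_t$, and $p_c$ (as the theory section stipulates). Only once these points are pinned down do the cross terms genuinely disappear and the two error budgets $\Delta_S^w + \Delta_R^w$ add up as asserted.
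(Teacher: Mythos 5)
Your proof is correct, but it takes a genuinely different route from the paper's. The paper applies the elementary inequality $\norm{u-v}_2^2 \le 2\norm{u}_2^2 + 2\norm{v}_2^2$ at the very top level, bounding the ITE error directly by $2\big(R_{p_t}[\widehat{\beta}_{p_t,\lambda_{p_t}}] + R_{p_c}[\widehat{\beta}_{p_c,\lambda_{p_c}}]\big)$, then invokes Theorem \ref{thm:alg_1_error_bound_1} on each risk, and only at the end uses Assumptions \ref{ass:indep_PO} and \ref{ass:true_form_PO} to identify $R_{p_t}[\beta^*_{p_t,\lambda_{p_t}}] + R_{p_c}[\beta^*_{p_c,\lambda_{p_c}}]$ with the oracle ITE error (the product of the two residual means vanishes). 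You instead split the error into the oracle part $A$ and the estimation part $B = X\delta_t - X\delta_c$ with $\delta_w = \beta^*_{p_w,\lambda_{p_w}} - \widehat{\beta}_{p_w,\lambda_{p_w}}$, show $\mathbb{E}_p\langle A, B\rangle = 0$ by conditional centering of the residuals (this in fact needs only Linearity plus the tower property; conditional independence merely makes the joint law of $(Y(1),Y(0))$ given $X$ well defined), convert each excess risk exactly into $\mathbb{E}_p\norm{X\delta_w}_2^2$ via the bias--variance identity, and apply the factor-of-two inequality only to $B$. The payoff is a strictly sharper conclusion: coefficient $1$ rather than $2$ on the oracle term, with the same $2(\Delta_S^t+\Delta_R^t+\Delta_S^c+\Delta_R^c)$ budget; your route is essentially the one sketched in the Remark following Theorem \ref{thm:alg_1_error_bound}, except that you close it by controlling the cross term $-2\,\mathbb{E}_{p(x)}[X\delta_t\, X\delta_c]$ that the Remark leaves open. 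The one bookkeeping point you share with the paper is that intersecting the two $(1-\delta)$ events from Theorem \ref{thm:alg_1_error_bound_1} strictly gives probability $1-2\delta$ rather than $1-\delta$; since the paper's own proof carries the same slack, this is not a gap specific to your argument.
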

\begin{proof}
By the Cauchy-Schwarz inequality and Theorem \ref{thm:alg_1_error_bound_1}, we have
\begin{align*}
& \mathbb{E}_p \norm{(Y(1) - Y(0)) - (X \widehat{\beta}_{p_t, \lambda_{p_t}} - X \widehat{\beta}_{p_c, \lambda_{p_c}})}_2^2 \\
&\leq 2 \left( R_{p_t}[\widehat{\beta}_{p_t, \lambda_{p_t}}] + R_{p_c}[\widehat{\beta}_{p_c, \lambda_{p_c}}] \right) \\
&\leq 2 \left(R_{p_t}[\beta_{p_t, \lambda{p_t}}^*] + R_{p_c}[\beta_{p_c, \lambda_{p_c}}^*]\right) \\
& \hskip+3cm + 2\left(\Delta_S^t + \Delta_R^t + \Delta_S^c + \Delta_R^c\right).
\end{align*}
Now, Assumptions \ref{ass:indep_PO} and \ref{ass:true_form_PO} imply
\begin{align*}
& \mathbb{E}_p \norm{(Y(1) - Y(0)) - (X \beta_{p_t, \lambda_{p_t}}^* - X \beta_{p_c, \lambda_{p_c}}^*)}_2^2 \\
&= R_{p_t}[\beta_{p_t, \lambda{p_t}}^*] + R_{p_c}[\beta_{p_c, \lambda_{p_c}}^*] \\
& \hskip+0.5cm -2 \mathbb{E}_{p_t}[ Y(1) - X \beta_{p_t, \lambda_{p_t}}^*] \times \mathbb{E}_{p_c}[ Y(0) - X \beta_{p_c, \lambda_{p_c}}^*] \\
&= R_{p_t}[\beta_{p_t, \lambda{p_t}}^*] + R_{p_c}[\beta_{p_c, \lambda_{p_c}}^*].
\end{align*}
This ends the proof.
\end{proof}

\begin{remark}
\label{rm:error_bound_1}
We can bound the term in Theorem \ref{thm:alg_1_error_bound} in a different way by noting that 
\begin{align*}
	\mathbb{E}_{p_t(y | x)} [Y(1) - X \widehat{\beta}_{p_t, \lambda_{p_t}}] = X (\beta_{p_t, \lambda_{p_t}}^* - \widehat{\beta}_{p_t, \lambda_{p_t}}) \\
    	\mathbb{E}_{p_c(y | x)} [Y(0) - X \widehat{\beta}_{p_c, \lambda_{p_c}}] = X (\beta_{p_c, \lambda_{p_c}}^* - \widehat{\beta}_{p_c, \lambda_{p_c}}).
\end{align*}
Using the similar technique to that above, we can obtain
\begin{multline*}
	\mathbb{E}_p \norm{(Y(1) - Y(0)) - (X \widehat{\beta}_{p_t, \lambda_{p_t}} - X \widehat{\beta}_{p_c, \lambda_{p_c}})}_2^2 \leq \\
    \mathbb{E}_p \norm{(Y(1) - Y(0)) - (X \beta_{p_t, \lambda_{p_t}}^* - X \beta_{p_c, \lambda_{p_c}}^*)}_2^2 \\
    + \left(\Delta_S^t + \Delta_R^t + \Delta_S^c + \Delta_R^c\right) \\
    - 2 \mathbb{E}_{p(x)}[X (\beta_{p_t, \lambda_{p_t}}^* - \widehat{\beta}_{p_t, \lambda_{p_t}})  X (\beta_{p_c, \lambda_{p_c}}^* - \widehat{\beta}_{p_c, \lambda_{p_c}})].
\end{multline*}
This bound is less favorable because the terms $\widehat{\beta}_{p_t, \lambda_{p_t}}$ and $\widehat{\beta}_{p_c, \lambda_{p_c}}$ are still present. However, if we can bound these terms by $\beta_{p_t, \lambda_{p_t}}^*$ and $\beta_{p_c, \lambda_{p_c}}^*$ respectively then this error bound will actually be more favorable.
\end{remark}

Now, we move to the ITE estimator in Algorithm \ref{alg:factual-counterlBal}.
\begin{theorem}
\label{thm:alg_2_error_bound_1}
Assume \ref{ass:unconfoundedness}, \ref{ass:sameDistr}, and \ref{ass:weight_exist}. For $w \in \{t, c\}$, the following holds with probability at least $(1 - \delta)$:
\begin{equation*}
	R_{p_w} [\widehat{\beta}_w^{final}] \leq R_{p_w} [\beta_{p_w, \lambda_{p_w}}^*] + \Delta_{S_f}^w + \Delta_{R_f}^w,
\end{equation*}
where 
\begin{equation*}
	\Delta_{S_f}^w = \frac{L^2 \kappa^{\frac{3}{2}} \sigma^{\frac{1}{2}}(\mathbf{K}_w)}{n_w \lambda_w^\prime} \sqrt{\frac{(C n_w)^2 + 1}{n_w}} \left(1 + \sqrt{2 \log \left(\frac{6}{\delta}\right)}\right)
\end{equation*}
\vskip-0.75cm
\begin{multline*}
	\Delta_{R_f}^w = \frac{4 (C n_w) L \nu^\prime_w \sqrt{tr(\mathbf{K}_w)}}{n_t} + \\ 6 (C n_w) L \sqrt{\frac{\log(6 / \delta)}{2n_w}} + C L || \widehat{\beta}_{q_t, \lambda_{q_t}} ||_2
\end{multline*}
with 
\begin{equation*}
	\nu^\prime_w = \nu_w + \sqrt{\frac{4L}{\lambda_{q_w}} \left(\frac{2 \nu_{q_w} \sqrt{tr(\mathbf{K}_w)}}{n_w} + 3 \sqrt{\frac{\log (6 / \delta)}{2 n_w}}\right)}.
\end{equation*}
Here, the notations are defined as in Theorem \ref{thm:alg_1_error_bound_1}, $\nu_w = \big|\big|\beta_{p_t, \lambda_{p_t}}^* - \beta_{q_t, \lambda_{q_t}}^*\big|\big|_2^2$, and $\nu_{q_w} = \big|\big|\beta_{q_w, \lambda_{q_w}}^*\big|\big|_2^2$.
\end{theorem}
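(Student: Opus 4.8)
The plan is to reproduce the argument of Theorem~\ref{thm:alg_1_error_bound_1} for the two–stage estimator of Algorithm~\ref{alg:factual-counterlBal}, reducing the Step~$4$ program to a single application of Theorem~$5$ of \cite{reddi2015doubly} and then paying separately for the recentering of the penalty and for the plug-in error of $\widehat{\beta}_{q_w, \lambda_{q_w}}$. I treat $w = t$ throughout; the control case is identical after replacing $\eta$, $Y(1)$, and the treated quantities by $\gamma$, $Y(0)$, and the control ones.

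First I would eliminate the shift in the Step~$4$ penalty. That objective is a weighted least–squares loss plus $\lambda_w^\prime \norm{\beta - \widehat{\beta}_{q_w, \lambda_{q_w}}}_2^2$, so the change of variables $\widetilde{\beta} = \beta - \widehat{\beta}_{q_w, \lambda_{q_w}}$ turns it into an ordinary weighted ridge regression in $\widetilde{\beta}$ with origin-centered penalty $\lambda_w^\prime \norm{\widetilde{\beta}}_2^2$ and shifted responses $\widetilde{Y}_i = Y_i(1) - X_i \widehat{\beta}_{q_w, \lambda_{q_w}}$. This is exactly the form treated in Theorem~\ref{thm:alg_1_error_bound_1}, so I would apply Theorem~$5$ of \cite{reddi2015doubly} with weights $(n_w \eta)$, bound $(C n_w)$, and penalty $(n_w \lambda_w^\prime)$. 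Two things change relative to Theorem~\ref{thm:alg_1_error_bound_1}: up to the regularization-induced bias, the effective target of the reduced problem is $\beta_{p_w, \lambda_{p_w}}^* - \widehat{\beta}_{q_w, \lambda_{q_w}}$ rather than $\beta_{p_w, \lambda_{p_w}}^*$, so its norm replaces $\nu_{p_w}$ in the complexity term; and the shifted responses inflate the uniform outcome bound $L$ by an amount controlled by $C L \norm{\widehat{\beta}_{q_w, \lambda_{q_w}}}_2$, which is precisely the extra additive term in $\Delta_{R_f}^w$. The substitution $\lambda_{p_w} \mapsto \lambda_w^\prime$ accounts for the passage from $\Delta_S^w$ to $\Delta_{S_f}^w$.

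Next I would control the norm of the reduced target by the triangle inequality, $\norm{\beta_{p_w, \lambda_{p_w}}^* - \widehat{\beta}_{q_w, \lambda_{q_w}}}_2 \le \norm{\beta_{p_w, \lambda_{p_w}}^* - \beta_{q_w, \lambda_{q_w}}^*}_2 + \norm{\beta_{q_w, \lambda_{q_w}}^* - \widehat{\beta}_{q_w, \lambda_{q_w}}}_2$, whose first summand is the $\nu_w$ contribution to $\nu_w^\prime$ and whose second summand is the estimation error of the Step~$3$ fit. To bound that estimation error I would apply Theorem~$5$ of \cite{reddi2015doubly} once more, now to the unweighted regression $H_w$, obtaining an excess-risk control $R_{q_w}[\widehat{\beta}_{q_w, \lambda_{q_w}}] - R_{q_w}[\beta_{q_w, \lambda_{q_w}}^*]$ of order $\tfrac{2 \nu_{q_w} \sqrt{tr(\mathbf{K}_w)}}{n_w} + 3 \sqrt{\tfrac{\log(6/\delta)}{2 n_w}}$, and then convert this into an $\ell_2$ distance using that $R_{q_w}[\beta] + \lambda_{q_w}\norm{\beta}_2^2$ is $2\lambda_{q_w}$-strongly convex with minimizer $\beta_{q_w, \lambda_{q_w}}^*$, so that $\norm{\widehat{\beta}_{q_w, \lambda_{q_w}} - \beta_{q_w, \lambda_{q_w}}^*}_2^2 \le \tfrac{1}{\lambda_{q_w}}\big(R_{q_w}[\widehat{\beta}_{q_w, \lambda_{q_w}}] - R_{q_w}[\beta_{q_w, \lambda_{q_w}}^*]\big)$. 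Taking square roots and absorbing the conversion constant into the stated $4L$ reproduces the radical in $\nu_w^\prime = \nu_w + \sqrt{\tfrac{4L}{\lambda_{q_w}}(\cdots)}$. Finally I would split the confidence budget $\delta$ across the three invoked concentration events, namely the stability and Rademacher terms for the Step~$4$ fit together with the generalization bound for the Step~$3$ fit, which is what upgrades the $\log(4/\delta)$ of Theorem~\ref{thm:alg_1_error_bound_1} to $\log(6/\delta)$ here.

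The main obstacle is the strong-convexity conversion just described: turning the excess-risk guarantee for $\widehat{\beta}_{q_w, \lambda_{q_w}}$ into an $\ell_2$ guarantee and then threading it through the triangle inequality and the shifted-response inflation of $L$, all while keeping the union-bound accounting consistent so that the constants collapse to the stated $\nu_w^\prime$ and the additive $C L \norm{\widehat{\beta}_{q_w, \lambda_{q_w}}}_2$. The recentering reduction and the two invocations of Theorem~$5$ of \cite{reddi2015doubly} are otherwise mechanical once this conversion is in place.
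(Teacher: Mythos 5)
Your proposal is sound, but it takes a different (and much more explicit) route than the paper: the paper's entire proof is a one-line citation of Theorem~6 of \cite{reddi2015doubly} --- the two-stage ``doubly robust'' bound --- applied with rescaled weights $(n_w\eta)$, $(n_w\gamma)$ and bounds $(Cn_w)$, so all of the structure you derive is treated there as a black box. What you have done is essentially unpack that black box: the recentering $\widetilde{\beta} = \beta - \widehat{\beta}_{q_w,\lambda_{q_w}}$ reducing Step~4 to an ordinary weighted ridge problem with shifted responses, the triangle-inequality split of the reduced target into $\nu_w$ plus the Step-3 estimation error, the strong-convexity conversion of the excess-risk bound into an $\ell_2$ bound (which produces the $\sqrt{4L/\lambda_{q_w}(\cdots)}$ radical in $\nu'_w$), the $L$-inflation from the shifted responses (the $CL\|\widehat{\beta}_{q_t,\lambda_{q_t}}\|_2$ term), and the three-way union bound turning $\log(4/\delta)$ into $\log(6/\delta)$ all correspond to the internal mechanics of the cited theorem. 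Your version buys transparency --- it makes clear exactly which algorithmic choice generates which term in $\Delta_{S_f}^w$ and $\Delta_{R_f}^w$ --- at the cost of having to verify that the constants (in particular the precise form of the additive $CL\|\widehat{\beta}_{q_t,\lambda_{q_t}}\|_2$ term and the factor $4L/\lambda_{q_w}$) collapse exactly as stated, which the paper sidesteps by inheriting them wholesale. One small bookkeeping caution: the theorem defines $\nu_w$ and $\nu_{q_w}$ as \emph{squared} norms while your triangle inequality operates on unsquared norms; this mismatch is already present in the paper's own notation (inherited from \cite{reddi2015doubly}), so you should state explicitly which convention you are using when you assemble $\nu'_w$.
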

\begin{proof}
This is an application of Theorem $6$ in \cite{reddi2015doubly} (full paper) with balancing weights $(n_t \eta)$ and $(n_c \gamma)$, hyperparameters $(n_t \lambda_{p_t})$ and $(n_c \lambda_{p_c})$, and bounds $(C n_t)$ and $(C n_c)$.
\end{proof}
Similarly to Theorem \ref{thm:alg_1_error_bound}, we can bound the error of the ITE estimator in Algorithm \ref{alg:factual-counterlBal}. 
\begin{theorem}
\label{thm:alg_2_error_bound}
Under Assumptions \ref{ass:unconfoundedness}, \ref{ass:sameDistr}, \ref{ass:weight_exist}, \ref{ass:indep_PO}, \ref{ass:true_form_PO}, the following holds with probability at least $(1 - \delta)$:
\begin{multline*}
	\mathbb{E}_p \norm{(Y(1) - Y(0)) - (X \widehat{\beta}_t^{final} - X \widehat{\beta}_c^{final})}_2^2 \leq \\
    2\mathbb{E}_p \norm{(Y(1) - Y(0)) - (X \beta_{p_t, \lambda_{p_t}}^* - X \beta_{p_c, \lambda_{p_c}}^*)}_2^2 \\
    + 2\left(\Delta_{S_f}^t + \Delta_{R_f}^t + \Delta_{S_f}^c + \Delta_{R_f}^c\right),
\end{multline*}
where notations are defined in Theorem \ref{thm:alg_2_error_bound_1}.
\end{theorem}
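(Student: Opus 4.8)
The plan is to replay, almost verbatim, the argument behind Theorem~\ref{thm:alg_1_error_bound}, with the single-group guarantee of Theorem~\ref{thm:alg_1_error_bound_1} replaced by that of Theorem~\ref{thm:alg_2_error_bound_1}. For brevity write $\beta_t^* := \beta_{p_t, \lambda_{p_t}}^*$ and $\beta_c^* := \beta_{p_c, \lambda_{p_c}}^*$. The first step is to split the joint ITE error into two single-group risks. Since
\begin{equation*}
	(Y(1) - Y(0)) - (X\widehat{\beta}_t^{final} - X\widehat{\beta}_c^{final}) = (Y(1) - X\widehat{\beta}_t^{final}) - (Y(0) - X\widehat{\beta}_c^{final}),
\end{equation*}
the elementary bound $\norm{a - b}_2^2 \leq 2\norm{a}_2^2 + 2\norm{b}_2^2$, taken in expectation under $\mathbb{E}_p$, yields
\begin{equation*}
	\mathbb{E}_p \norm{(Y(1) - Y(0)) - (X\widehat{\beta}_t^{final} - X\widehat{\beta}_c^{final})}_2^2 \leq 2\left(R_{p_t}[\widehat{\beta}_t^{final}] + R_{p_c}[\widehat{\beta}_c^{final}]\right),
\end{equation*}
where I use that, by the stated meaning of $\mathbb{E}_p$ together with Assumption~\ref{ass:sameDistr}, $\mathbb{E}_p \norm{Y(1) - X\beta}_2^2 = R_{p_t}[\beta]$ and $\mathbb{E}_p \norm{Y(0) - X\beta}_2^2 = R_{p_c}[\beta]$.

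Next I would invoke Theorem~\ref{thm:alg_2_error_bound_1} for $w = t$ and $w = c$ to control each summand, giving $R_{p_w}[\widehat{\beta}_w^{final}] \leq R_{p_w}[\beta_{p_w, \lambda_{p_w}}^*] + \Delta_{S_f}^w + \Delta_{R_f}^w$. Substituting both into the display above produces $2(R_{p_t}[\beta_t^*] + R_{p_c}[\beta_c^*]) + 2(\Delta_{S_f}^t + \Delta_{R_f}^t + \Delta_{S_f}^c + \Delta_{R_f}^c)$. (Exactly as in Theorem~\ref{thm:alg_1_error_bound}, holding both single-group guarantees at once is a union-bound step over the treated and control events; I would follow the same probability convention used there so that the overall statement holds with probability at least $1 - \delta$.)

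It then remains to recognize the leading term $2(R_{p_t}[\beta_t^*] + R_{p_c}[\beta_c^*])$ as twice the oracle ITE error on the right-hand side of the theorem. Expanding the square,
\begin{multline*}
	\mathbb{E}_p \norm{(Y(1) - Y(0)) - (X\beta_t^* - X\beta_c^*)}_2^2 = R_{p_t}[\beta_t^*] + R_{p_c}[\beta_c^*] \\
	- 2\,\mathbb{E}_p\big[(Y(1) - X\beta_t^*)(Y(0) - X\beta_c^*)\big].
\end{multline*}
The heart of the argument is that the cross term vanishes: conditioning on $X$ and applying Assumption~\ref{ass:indep_PO} factorizes it as $\mathbb{E}_{p_t}[Y(1) - X\beta_t^* \mid X]\,\mathbb{E}_{p_c}[Y(0) - X\beta_c^* \mid X]$, and Assumption~\ref{ass:true_form_PO} forces each conditional factor to be identically zero. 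Hence the oracle error equals $R_{p_t}[\beta_t^*] + R_{p_c}[\beta_c^*]$, precisely as in the proof of Theorem~\ref{thm:alg_1_error_bound}; feeding this identity back into the previous step completes the argument.

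I do not expect the combination argument itself to be the obstacle — it is identical to the one already carried out for Algorithm~\ref{alg:counterfactualBal}. The delicate part is entirely upstream, namely Theorem~\ref{thm:alg_2_error_bound_1}: one must check that the two-stage estimator $\widehat{\beta}_w^{final}$ of Algorithm~\ref{alg:factual-counterlBal} fits the template of Theorem~$6$ of \cite{reddi2015doubly}, with the penalty anchoring $\beta$ toward $\widehat{\beta}_{q_w, \lambda_{q_w}}$ absorbed into the modified radius $\nu_w^\prime$. Since that verification is exactly what Theorem~\ref{thm:alg_2_error_bound_1} already supplies, here it can be quoted directly, and the present theorem follows purely by the bookkeeping above.
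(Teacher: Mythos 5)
Your proposal is correct and follows exactly the paper's own route: the paper's proof of this theorem is literally ``apply the same technique as in Theorem~\ref{thm:alg_1_error_bound},'' and you have written out that technique in full --- the $\norm{a-b}_2^2 \leq 2\norm{a}_2^2 + 2\norm{b}_2^2$ split into the two single-group risks, the invocation of Theorem~\ref{thm:alg_2_error_bound_1} for $w \in \{t,c\}$, and the vanishing of the cross term via Assumptions~\ref{ass:indep_PO} and~\ref{ass:true_form_PO}. Your parenthetical about the union bound over the treated and control events is a fair point that the paper itself glosses over, but it does not change the substance of the argument.
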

\begin{proof}
We can apply exactly the same technique as in the proof of Theorem \ref{thm:alg_1_error_bound} to establish this bound. Also, a similar remark could give another error bound. 
\end{proof}

\section{Related Work}
\label{sec:relate_work}

Causal inference has been a major research area across different fields, ranging from economics through statistics, sociology, education, medicine to machine learning (see \cite{rubin2011causal, morgan2014counterfactuals, tian2014simple, chernozhukov2013inference, van2007causal, langford2011doubly, scholkopf2012causal} among others). 

Many methods have been proposed to give average treatment effect estimates in both experimental and observational studies (see \cite{imbens2015causal} for an overview). Another parallel focus is on the estimation of ITEs. Recently, there is a new trend in estimating ITEs with high-dimensional observational data. The methods in this direction include nearest-neighbor matching, Inverse Propensity Score Weighting \cite{rosenbaum2002observational, austin2011introduction}, and notably Ordinary Least Square (OLS) with regularization (such as Elastic-net \cite{hastie2011elements}), Bayesian Additive Regression Trees (BART) \cite{chipman2010bart}, Causal Forest \cite{wager2016estimation}, and Counterfactual regression (CFR) by learning balanced representations \cite{johansson2016learning}.  

OLS with regularization consists of estimating the regularized models for the treated and control data separately and take difference between the two. 

Motivated by boosting algorithms, BART is a nonparametric Bayesian regression using dimensionally adaptive random basis elements. This method has been used successfully in the past for causal inference learning \cite{hill2011bayesian}.

Causal Forest is an extension and modification of the traditional random forest algorithm \cite{breiman2001random}. This method consists of many causal trees, each of which estimates the treatment effect at the leaves \cite{athey2016recursive}.

The CFR by learning balanced representation method is based on using deep learning on an objective function comprising of both factual and proxy counterfactual errors and the discrepancy distance between the distributions of the treated and control groups.

\section{Experiments}
\label{sec:experiment}

In this section, we use simulated data to demonstrate the performance of the proposed method relative to the state-of-the-art approaches. 

\subsection{Simulation Settings}
We consider three simulated datasets.

\subsubsection{Linear Outcome Model}
The first simulation has complex propensity score model but simple linear outcome model; also, both the propensity score and outcome models are sparse. Specifically, the data $(Y_i, W_i, X_i)_{i = 1}^n$ has for each $i$: 
		\begin{itemize} 
			\item $X_i \sim N(0, I_{p \times p})$; 
			\item $W_i \sim Bernoulli (\theta_i)$ where $\theta_i = 1 - 1 / (1 + e^{X_i \beta_W})^{1.23}$ with $\beta_W = (\underbrace{1, ..., 1}_\text{$10$}, \underbrace{0, ..., 0}_\text{$p-10$})$; 
			\item $Y_i = X_i \beta + 10W_i + \epsilon_i$ where $\beta_j = 1 / j^2$ for $j = 1, ..., p$. Here, $\epsilon_i$'s are i.i.d. $N(0, 1)$. 
		\end{itemize}
We use $p = 100$ and $n = 1500$ in which $500$ are for testing. 

\subsubsection{Complex Outcome Model}
The second simulation has both the propensity score and outcome models being complex. In this simulation, the propensity score model is sparse but the outcome model is dense. More precisely, the data $(Y_i, W_i, X_i)_{i = 1}^n$ consists of:
		\begin{itemize}
			\item $X_i \sim N(0, I_{p \times p})$;
			\item $W_i \sim Bernoulli\big(1 - e^{-\theta_i}\big)$ where $\theta_i = 0.89 \times \log \left(1 + e^{-2 - 2X_i \beta_W}\right)$ with $\beta_W = (\underbrace{1, ..., 1}_\text{$10$}, \underbrace{0, ..., 0}_\text{$p-10$})$;
			\item $Y_i = X_i \beta + \theta_i (2W_i - 1) / 2 + \epsilon_i$ where $\beta_j = 1$ for $j = 1, ..., p$. Here, $\epsilon_i$'s are i.i.d. $N(0, 1)$.  
		\end{itemize}
We also use $p = 100$, $n = 1500$ with $500$ testing units. 

\subsubsection{IHDP based Simulation}
We generate a dataset based on \cite{hill2011bayesian} who in turn uses a dataset based on the Infant Health and Development Program (IHDP). The original IHDP dataset is formed from a randomized control experiment. \cite{hill2011bayesian} uses the same covariates from this IHDP dataset, simulates outcomes and introduces an imbalance between the treated and control groups by removing a subset of the treated group. In total, \cite{hill2011bayesian} uses a dataset of 747 individuals in which 139 are treated and there are 25 covariates. More details can be found in that paper. We modify the outcome formula in setting B of \cite{dorie2016nonparametrics} while keeping other things unchanged:
\begin{equation*}
	Y(0) = \sqrt{\exp([\textbf{1}, (X + 0.5)] \times \beta_0)} + N(0, 1)
\end{equation*}
and
\begin{equation*}
    Y(1) = [\textbf{1}, (X + 0.5)] \times \beta_1 + N(0, 1)
\end{equation*}
where $(\beta_0)_j = 0$ with probability 0.6, and equals each of $\{0.1, 0.2, 0.3, 0.4\}$ with probability $0.1$ for $j = 1, ..., 26$; and $(\beta_1)_j = 1 / j$ for $j = 1, ..., 26$.  

Due to non-random removal of the treated population, the propensity score model is not simple. The outcome model is also complex and sparse. The histogram of the ITE of this dataset is provided in Figure \ref{fig:IHDP_based_data}. 

\begin{figure}[t!]
\centering
\includegraphics[width=1.0\columnwidth]{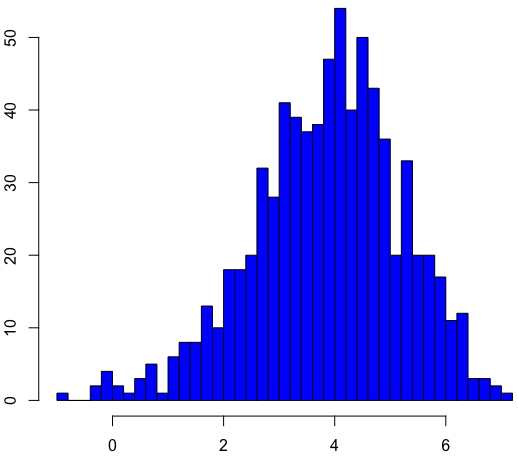}
\caption{Histogram of ITE in IHDP Based Data.}
\label{fig:IHDP_based_data}
\end{figure}

\subsection{Comparison Metrics}
We have a hold-out simulated test set: $(Y^{te}_i$, $Y_i^{te}(1)$, $Y_i^{te}(0)$, $W_i^{te}$, $X_i^{te})_{i = 1}^{n_{te}}$. 

We compare the methods in terms of Precision in Estimation of Heterogeneous Effect (PEHE), where PEHE is
\begin{equation*}
	\sqrt{\frac{1}{n_{te}} \sum_{i = 1}^{n_{te}} \big(\big(\widehat{Y}_i^{te}(1) - \widehat{Y}_i^{te}(0)\big) - \big(Y_i^{te}(1) - Y_i^{te}(0)\big)\big)^2}.
\end{equation*}
Here, we assume that we observe only $(X^{te}_i)_{i = 1}^{n_{te}}$ on the test set. For both metrics, the estimated values $(\widehat{Y}_i^{te}(1), \widehat{Y}_i^{te}(0))$ are given by the estimation method; for example, the estimator in Algorithm \ref{alg:factual-counterlBal} would use $\widehat{Y}_i^{te}(1) = X_i^{te} \widehat{\beta}_t^{final}$ and $\widehat{Y}_i^{te}(0) = X_i^{te} \widehat{\beta}_c^{final}$. 

The PEHE metric is motivated by the fact that in many cases, we are interested in an ITE estimation of an individual with only observed covariates $X$. One such case is when one needs to make a policy decision on whether an individual should apply the treatment. In such cases, neither $W$ nor $Y$ is observed at the time of decision making. Apparently, a good PEHE requires good estimations of both factual and counterfactual outcomes, or at least the difference between the two, not just that of the counterfactual. 

\subsection{Simulation Results}
The methods we use to compare our proposed method with are currently state-of-the-art in estimating ITEs in high-dimensional observational data: OLS with elastic-net regularization \cite{hastie2011elements}, BART \cite{hill2011bayesian}, Causal Forest (CF) \cite{wager2016estimation}, Counterfactual regression (CFR) by learning balanced representations \cite{johansson2016learning}. We use the provided \textsf{R} packages BayesTree for BART, causalForest for CF, and the provided CFR python code (on TensorFlow) for CFR. We do not tune the parameters but use the default values instead.     

Regarding our proposed methods, we consider both Algorithms \ref{alg:counterfactualBal} and \ref{alg:factual-counterlBal} and both $L_\infty$ and $L_2$ norms for balancing weight estimation. 

\begin{table}[t]
\caption{Method Comparison For ITE Estimation on Simulated Data 1 (Linear Outcome Model)}
\label{tab:dat_1}
\vskip 0.15in
\begin{center}
\begin{small}
\begin{sc}
\begin{tabular}{lc}
\hline
\abovespace\belowspace
Method & Simple Data \\
\hline
\abovespace
OLS + R  			& 0.35 $\pm$ 0.06  \\
BART     			& 0.56 $\pm$ 0.06  \\
C.Forest 			& 0.72 $\pm$ 0.08  \\ 
CFR      			& 0.42 $\pm$ 0.13  \\ \hline
\textbf{Algo 1 - $L_\infty$} & \textbf{0.29 $\pm$ 0.13} \\
Algo 1 - $L_2$      & 0.41 $\pm$ 0.20  \\
Algo 2 - $L_\infty$ & 1.01 $\pm$ 0.10  \\
Algo 2 - $L_2$      & 0.89 $\pm$ 0.09  \\
\hline
\end{tabular}
\end{sc}
\end{small}
\end{center}
\vskip -0.15in
\end{table}

\begin{table}[t]
\caption{Method Comparison For ITE Estimation on Simulated Data 2 (Complex Outcome Model)}
\label{tab:dat_2}
\vskip 0.15in
\begin{center}
\begin{small}
\begin{sc}
\begin{tabular}{lc}
\hline
\abovespace\belowspace
Method & Complex Data \\
\hline
\abovespace
OLS + R  			& 2.90 $\pm$ 0.15  \\
BART     			& 7.71 $\pm$ 0.45  \\
C.Forest 			& 5.12 $\pm$ 0.64  \\ 
CFR      			& 3.62 $\pm$ 1.11  \\ \hline
Algo 1 - $L_\infty$ & 8.23 $\pm$ 1.20  \\
Algo 1 - $L_2$      & 6.79 $\pm$ 1.90  \\
\textbf{Algo 2 - $L_\infty$} & \textbf{2.84 $\pm$ 0.16} \\
\textbf{Algo 2 - $L_2$}      & \textbf{2.79 $\pm$ 0.16} \\
\hline
\end{tabular}
\end{sc}
\end{small}
\end{center}
\vskip -0.15in
\end{table}
\begin{table}[t]
\caption{Method Comparison For ITE Estimation on Simulated Data 3 (IHDP Based Data)}
\label{tab:dat_ihdp}
\vskip 0.15in
\begin{center}
\begin{small}
\begin{sc}
\begin{tabular}{lcc}
\hline
\abovespace\belowspace
Method & IHDP based Data \\
\hline
\abovespace
OLS + R & 0.61 $\pm$ 0.13 \\ 
BART & 0.52 $\pm$ 0.06 \\ 
C.Forest & 0.68 $\pm$ 0.16 \\ 
CFR      & 0.84 $\pm$ 0.23 \\ \hline
Algo 1 - $L_\infty$ & 0.63 $\pm$ 0.13 \\ 
Algo 1 - $L_2$ & 1.20 $\pm$ 0.47 \\ 
\textbf{Algo 2 - $L_\infty$} & \textbf{0.44 $\pm$ 0.09} \\ 
Algo 2 - $L_2$ & 0.92 $\pm$ 0.22 \\ 
\hline
\end{tabular}
\end{sc}
\end{small}
\end{center}
\vskip -0.2in
\end{table}
The results are reported in Tables \ref{tab:dat_1}, \ref{tab:dat_2}, \ref{tab:dat_ihdp} for the three simulated datasets. For the first dataset (with linear outcome model), the proposed Algorithm \ref{alg:counterfactualBal} with $L_\infty$ norm in defining the balancing weights works best while the Algorithm \ref{alg:factual-counterlBal} is extremely weak. However for the last two datasets (with complex outcome model),  Algorithm \ref{alg:factual-counterlBal} (also with $L_\infty$ norm) performs strongly. We can intuitively interpret this as that when the outcome model is simple, the simple weighted OLS with regularization can capture all the variations. When this model is complex, we need two steps to get closer to the true ITE values. Also, in general the $L_\infty$ norm in the balancing weight estimation step tends to perform better than the $L_2$ norm.       

In any dataset, the OLS with Regularization is a very strong baseline and hard to beat. BART does not perform as well as it is claimed in \cite{hill2011bayesian}. This is probably because we use the default hyperparameters. The same is true for the CFR method \cite{johansson2016learning}. Tuning hyperparameters for these two methods would likely help improving the performance greatly. Causal Forest \cite{wager2016estimation} is pretty bad for ITE estimation task too. This may be caused by the fact that we use $500$ trees; increasing the number of trees to a much bigger number, say $10,000$ might help. Of course, tuning the hyperparameters for BART and CFR or using a larger number of trees in Causal Forest would be computationally expensive.

\section{Practical Recommendations}
\label{sec:pracRec}

The question we can naturally raise is which of the two proposed algorithms is better as well as which method we should use in each situation.

First, whichever algorithm has a smaller error bound would be more likely to have better performance, though not guaranteed. If 
$$\nu_w = ||\beta_{p_t, \lambda_{p_t}}^* - \beta_{q_t, \lambda_{q_t}}^*||_2^2 << ||\beta_{p_w, \lambda_{p_w}}^*||_2^2 = \nu_{p_w},$$ 
then the error bound of the estimator in Algorithm \ref{alg:factual-counterlBal} is smaller than that of Algorithm \ref{alg:counterfactualBal}. The problem is that we do not have a clue if this condition holds.

Below, we provide several practical recommendations to practitioners who need to perform ITE estimation tasks. 

\begin{itemize}
	\item If we know for sure that the outcome model is simple (e.g., linear), then we should use Algorithm \ref{alg:counterfactualBal}. Otherwise, it is advisable to always use Algorithm \ref{alg:factual-counterlBal}. 
    
    \item We should use $L_\infty$ norm in determining the balancing weights in the first step of either proposed algorithm as it appears more stable than the $L_2$ norm. 
    
    \item There is no single method that outperforms all others in all settings. It is a good practice to use several state-of-the-art methods to estimate the ITEs. If the estimations returned by these methods are significantly different, then the results returned by any single method is not credible enough for the decision makers to take a confident action. As a related work, \citet{athey2017estimating} propose a set of supplementary analyses for determining the credible ATE estimate.

\end{itemize}

\section{Conclusion}
\label{sec:conclusion}

In this paper, we propose a framework for estimating the ITEs in high-dimensional observational studies. Our framework consists of defining balancing weights between the treated and control groups with the whole population, and using these weights for later steps. Beside theoretical error bound results, we supply an empirical analysis based on three simulated datasets. Our proposed methods perform very well, and even outperform the state-of-the-art approaches in the three considered simulations. Moreover, our methods have the merit of simplicity in terms of implementation with mostly no need for hyperparameter tuning. 

The questions remain, still, include determining which method is the best in each situation as no method would always be superior to all others. Supplementary analyses will likely be required for the practitioners to make credible decisions. 

Another research direction includes considering multiple treatments in which case the balancing weights should be more difficult to define. 

\newpage
{\normalsize{
\bibliography{fcbalance}

\begin{thebibliography}{33}
\providecommand{\natexlab}[1]{#1}
\providecommand{\url}[1]{\texttt{#1}}
\expandafter\ifx\csname urlstyle\endcsname\relax
  \providecommand{\doi}[1]{doi: #1}\else
  \providecommand{\doi}{doi: \begingroup \urlstyle{rm}\Url}\fi

\bibitem[Athey \& Imbens(2016)Athey and Imbens]{athey2016recursive}
Athey, Susan and Imbens, Guido.
\newblock Recursive partitioning for heterogeneous causal effects.
\newblock \emph{Proceedings of the National Academy of Sciences}, 113(27):7353--7360, 2016.

\bibitem[Athey et~al.(2017)Athey, Imbens, Pham, and Wager]{athey2017estimating}
Athey, Susan, Imbens, Guido, Pham, Thai, and Wager, Stefan. 
\newblock Estimating average treatment effects: Supplementary analyses and remaining challenges. 
\newblock \emph{arXiv preprint arXiv:1702.01250}, 2017.

\bibitem[Athey et~al.(2016)Athey, Imbens, and Wager]{athey2016approximate}
Athey, Susan, Imbens, Guido, and Wager, Stefan.
\newblock Approximate Residual Balancing: De-Biased Inference of Average Treatment Effects in High Dimensions.
\newblock \emph{arXiv preprint arXiv:1604.07125v3}, 2016.

\bibitem[Austin(2011)]{austin2011introduction}
Austin, Peter~C.
\newblock An introduction to propensity score methods for reducing the effects of confounding in observational studies.
\newblock \emph{Multivariate behavioral research}, 46\penalty0 (3):\penalty0 399--424, 2011.

\bibitem[Bang \& Robins(2005)Bang and Robins]{bang2005doubly}
Bang, Heejung and Robins, James~M.
\newblock Doubly robust estimation in missing data and causal inference models.
\newblock \emph{Biometrics}, 61\penalty0 (4):\penalty0 962--973, 2005.

\bibitem[Beygelzimer et~al.(2010)Beygelzimer, Langford, Li, Reyzin, and Schapire]{beygelzimer2010contextual}
Beygelzimer, Alina, Langford, John, Li, Lihong, Reyzin, Lev, and Schapire, Robert~E.
\newblock Contextual bandit algorithms with supervised learning guarantees.
\newblock \emph{arXiv preprint arXiv:1002.4058}, 2010.

\bibitem[Bottou et~al.(2013)Bottou, Peters, Quinonero-Candela, Charles, Chickering, Portugaly, Ray, Simard, and Snelson]{bottou2013counterfactual}
Bottou, L{\'e}on, Peters, Jonas, Quinonero-Candela, Joaquin, Charles, Denis~X, Chickering, D~Max, Portugaly, Elon, Ray, Dipankar, Simard, Patrice, and Snelson, Ed.
\newblock Counterfactual reasoning and learning systems: The example of computational advertising.
\newblock \emph{The Journal of Machine Learning Research}, 14\penalty0 (1):\penalty0 3207--3260, 2013.

\bibitem[Breiman(2001)]{breiman2001random}
Breiman, Leo.
\newblock Random Forests.
\newblock \emph{Machine Learning}, 45\penalty0 (1):\penalty0 5--32, 2001.

\bibitem[Chernozhukov et~al.(2013)Chernozhukov, Fern{\'a}ndez-Val, and Melly]{chernozhukov2013inference}
Chernozhukov, Victor, Fern{\'a}ndez-Val, Iv{\'a}n, and Melly, Blaise.
\newblock Inference on counterfactual distributions.
\newblock \emph{Econometrica}, 81\penalty0 (6):\penalty0 2205--2268, 2013.

\bibitem[Chipman \& McCulloch(2016)Chipman and McCulloch]{chip2016bayesTree}
Chipman, Hugh and McCulloch, Robert.
\newblock Bayes{T}ree: Bayesian additive regression trees.
\newblock \url{https://cran.r-project.org/package=BayesTree/}, 2016.
\newblock Accessed: 2017-01-17.

\bibitem[Chipman et~al.(2010)Chipman, George, and McCulloch]{chipman2010bart}
Chipman, Hugh~A, George, Edward~I, and McCulloch, Robert~E.
\newblock Bart: Bayesian additive regression trees.
\newblock \emph{The Annals of Applied Statistics}, pp.\  266--298, 2010.

\bibitem[Cortes \& Mohri(2014)Cortes and Mohri]{cortes2014domain}
Cortes, Corinna and Mohri, Mehryar.
\newblock Domain adaptation and sample bias correction theory and algorithm for regression.
\newblock \emph{Theoretical Computer Science}, 519:\penalty0 103--126, 2014.

\bibitem[Daume~III \& Marcu(2006)Daume~III and Marcu]{daume2006domain}
Daume~III, Hal and Marcu, Daniel.
\newblock Domain adaptation for statistical classifiers.
\newblock \emph{Journal of Artificial Intelligence Research}, pp.\  101--126, 2006.

\bibitem[Dorie(2016)]{dorie2016nonparametrics}
Dorie, Vincent.
\newblock {NPCI}: Non-parametrics for causal inference.
\newblock \url{https://github.com/vdorie/npci}, 2016.
\newblock Accessed: 2016-01-30.

\bibitem[Dud{\'\i}k et~al.(2011)Dud{\'\i}k, Langford, and Li]{dudik2011doubly}
Dud{\'\i}k, Miroslav, Langford, John, and Li, Lihong.
\newblock Doubly robust policy evaluation and learning.
\newblock \emph{arXiv preprint arXiv:1103.4601}, 2011.

\bibitem[Gelman \& Hill(2006)Gelman and Hill]{gelman2006data}
Gelman, Andrew and Hill, Jennifer.
\newblock Data analysis using regression and multilevel/hierarchical models.
\newblock \emph{Cambridge University Press}, 2006.

\bibitem[Hastie et~al.(2011)Hastie, Tibshirani, and Friedman]{hastie2011elements}
Hastie, Trevor, Tibshirani, Robert, and Friedman, Jerome.
\newblock The Elements of Statistical Learning: Data Mining, Inference, and Prediction.
\newblock \emph{Springer}, 2nd Ed., 2011.

\bibitem[Hill(2011)]{hill2011bayesian}
Hill, Jennifer~L.
\newblock Bayesian nonparametric modeling for causal inference.
\newblock \emph{Journal of Computational and Graphical Statistics}, 20\penalty0
  (1), 2011.
  
\bibitem[Holland(1986)]{holland1986statistics}
Holland, Paul~W.
\newblock Statistics and Causal Inference (with discussion).
\newblock \emph{Journal of the American Statistical Association}, 81(396), 945--970, 1986.

\bibitem[Imbens \& Rubin(2015)Imbens and Rubin]{imbens2015causal}
Imbens, Guido and Rubin, Donald.
\newblock Causal Inference for Statistics, Social, and Biomedical Sciences: An Introduction.
\newblock \emph{Cambridge University Press}, 2015.

\bibitem[Jiang(2008)]{jiang2008literature}
Jiang, Jing.
\newblock A literature survey on domain adaptation of statistical classifiers.
\newblock Technical report, \emph{University of Illinois at Urbana-Champaign}, 2008.

\bibitem[Johansson et~al.(2016)Johansson, Shalit, and Sontag]{johansson2016learning}
Johansson, Fredrik~D, Shalit, Uri, and Sontag, David.
\newblock Learning Representations for Counterfactual Inference.
\newblock In \emph{Proceedings of the 33rd International Conference on Machine Learning (ICML-16)}, 2016.

\bibitem[Kang \& Schafer(2007)Kang and Schafer]{kang2007demystifying}
Kang, Joseph~DY and Schafer, Joseph~L.
\newblock Demystifying double robustness: A comparison of alternative strategies for estimating a population mean from incomplete data.
\newblock \emph{Statistical science}, pp.\  523--539, 2007.

\bibitem[Langford et~al.(2011)Langford, Li, and Dud{\'\i}k]{langford2011doubly}
Langford, John, Li, Lihong, and Dud{\'\i}k, Miroslav.
\newblock Doubly robust policy evaluation and learning.
\newblock In \emph{Proceedings of the 28th International Conference on Machine Learning (ICML-11)}, pp.\  1097--1104, 2011.

\bibitem[Morgan \& Winship(2014)Morgan and Winship]{morgan2014counterfactuals}
Morgan, Stephen~L and Winship, Christopher.
\newblock Counterfactuals and causal inference.
\newblock \emph{Cambridge University Press}, 2014.

\bibitem[Pearl(2009)]{pearl2009causality}
Pearl, Judea.
\newblock Causality.
\newblock \emph{Cambridge University Press}, 2009.

\bibitem[Reddi et~al.(2015)Reddi, Poczos, and Smola]{reddi2015doubly}
Reddi, Sashank~J., Poczos, Barnabas, and Smola, Alex.
\newblock Doubly Robust Covariate Shift Correction.
\newblock \emph{Twenty-Ninth AAAI Conference on Artificial Intelligence}, 2015.

\bibitem[Rosenbaum(2002)]{rosenbaum2002observational}
Rosenbaum, Paul~R.
\newblock \emph{Observational studies}.
\newblock Springer, 2002.

\bibitem[Rubin(2011)]{rubin2011causal}
Rubin, Donald~B.
\newblock Causal inference using potential outcomes.
\newblock \emph{Journal of the American Statistical Association}, 2011.

\bibitem[Sch{\"o}lkopf et~al.(2012)Sch{\"o}lkopf, Janzing, Peters, Sgouritsa, Zhang, and Mooij]{scholkopf2012causal}
Sch{\"o}lkopf, B., Janzing, D., Peters, J., Sgouritsa, E., Zhang, K., and Mooij, J.
\newblock On causal and anticausal learning.
\newblock In \emph{Proceedings of the 29th International Conference on Machine
  Learning}, pp.\  1255--1262, New York, NY, USA, 2012. Omnipress.

\bibitem[Strehl et~al.(2010)Strehl, Langford, Li, and
  Kakade]{strehl2010learning}
Strehl, Alex, Langford, John, Li, Lihong, and Kakade, Sham~M.
\newblock Learning from logged implicit exploration data.
\newblock In \emph{Advances in Neural Information Processing Systems}, pp.\
  2217--2225, 2010.
 
\bibitem[Sutton \& Barto(1998)Sutton and Barto]{sutton1998reinforcement}
Sutton, Richard~S. and Barto, Andrew~G. 
\newblock Reinforcement learning: An introduction. 
\newblock \emph{MIT press Cambridge}, vol. 1, 1998.

\bibitem[Tian et~al.(2014)Tian, Alizadeh, Gentles, and
  Tibshirani]{tian2014simple}
Tian, Lu, Alizadeh, Ash~A, Gentles, Andrew~J, and Tibshirani, Robert.
\newblock A simple method for estimating interactions between a treatment and a
  large number of covariates.
\newblock \emph{Journal of the American Statistical Association}, 109\penalty0
  (508):\penalty0 1517--1532, 2014.

\bibitem[van~der Laan \& Petersen(2007)van~der Laan and
  Petersen]{van2007causal}
van~der Laan, Mark~J and Petersen, Maya~L.
\newblock Causal effect models for realistic individualized treatment and
  intention to treat rules.
\newblock \emph{The International Journal of Biostatistics}, 3\penalty0 (1),
  2007.

\bibitem[Wager \& Athey(2016)Wager and Athey]{wager2016estimation}
Wager, Stefan and Athey, Susan.
\newblock Estimation and inference of heterogeneous treatment effects using random forests.
\newblock \emph{arXiv preprint arXiv:1510.04342v3}, 2016.

\bibitem[Zubizarreta(2015)]{zubizarreta2015stable}
Zubizarreta, Jos\'e~R.
\newblock Stable weights that balance covariates for estimation with incomplete outcome data.
\newblock \emph{Journal of the American Statistical Association}, 110(511): 910--922, 2015.

\end{thebibliography}
\bibliographystyle{icml2016}
}}

\end{document}